\newcommand{\reffig}[1]{Figure \ref{fig:#1}}
\newcommand{\refeq}[1]{(\ref{eq:#1})}
\newcommand{\refdef}[1]{Definition \ref{def:#1}}
\newcommand{\reftheo}[1]{Theorem \ref{theo:#1}}
\newcommand{\refprop}[1]{Proposition \ref{prop:#1}}
\newcommand{\refalgo}[1]{Algorithm \ref{algo:#1}}
\newcommand{\refrema}[1]{Remark \ref{rema:#1}}
\newcommand{\refsec}[1]{Section \ref{sec:#1}}
\newenvironment{psmallmatrix}
  {\left(\begin{smallmatrix}}
  {\end{smallmatrix}\right)}
\tikzstyle{block} = [draw, fill=blue!20, rectangle, 
\tikzstyle{sum} = [draw, fill=blue!20, circle, node distance=1cm]
\tikzstyle{input} = [coordinate]
\tikzstyle{output} = [coordinate]
\tikzstyle{pinstyle} = [pin edge={to-,thin,black}]
\title[Data-Driven Controller Design]{Data-Driven Controller Design via Finite-Horizon Dissipativity}
\author{\Name{Nils Wieler} \Email{nils.wieler@ist.uni-stuttgart.de}\\
\Name{Julian Berberich} \Email{julian.berberich@ist.uni-stuttgart.de}\\
\Name{Anne Koch} \Email{anne.koch@ist.uni-stuttgart.de}\\
\Name{Frank Allgöwer} \Email{frank.allgower@ist.uni-stuttgart.de}\\
 \addr Institute for Systems Theory and Automatic Control, University of Stuttgart, 70550 Stuttgart, Germany}
\begin{document}

\maketitle

\begin{abstract}%
Given one open-loop measured trajectory of a single-input single-output discrete-time linear time-invariant system, we present a framework for data-driven controller design for closed-loop finite-horizon dissipativity. First, we parametrize all closed-loop trajectories using the given data of the plant and a model of the controller. We then provide an approach to validate the controller by verifying closed-loop dissipativity in the standard feedback loop based on this parametrization. We use these conditions to design controllers leading to closed-loop dissipativity based on a quadratic matrix inequality feasibility problem. Finally, the results are illustrated with a simulation example.
\end{abstract}

\begin{keywords}%
  Data-driven control, dissipativity, linear systems, input-output methods
\end{keywords}

\section{Introduction}
The majority of the control theory literature is based on knowledge of a mathematical model for the system one wants to control. This model-based theory is well-established and successfully used in many applications, where modeling by first principles is possible. In the last years, the complexity of systems has increased significantly and with the increasing amount of available data, data-driven methods have become more and more important. In many data-driven methods, the modeling part is skipped and the given data is used directly for controller design. However, many of the developed data-driven methods lack the well-known guarantees for stability, performance and robustness from the model-based setting. A summary of existing data-driven control methods is available in \cite{hou2013}. The so-called Fundamental Lemma by \cite{willems2005} has gained increasing attention in data-driven control since it provides a powerful approach to replace the usual mathematical model with a representation of the system directly on the basis of data. It was successfully applied in several fields, for example in data-driven simulation (\cite{markovsky2008}), system analysis (e.g. \cite{maupong2017,romer2019,koch2020}), controller design (e.g. \cite{markovsky2007,maupong2017behav,persis2020,berberich2020State}) and model predictive control (\cite{coulson2019,berberich2020MPC}). In this paper, we investigate data-driven controller synthesis for closed-loop dissipativity, using only measured input-output data, whereas most existing works with a similar goal require availability of state measurements. To this end, we first present a parametrization of all closed-loop trajectories in the standard feedback loop, given open-loop data of an unknown single-input single-output (SISO) plant and a model of the controller, employing Willems Fundamental Lemma. We then use this parametrization to provide necessary and sufficient conditions for closed-loop dissipativity. Finally, we show how these conditions can be used for controller design by translating them into a quadratic matrix inequality (QMI) feasibility problem, which can be solved using, e.g., difference of convex functions (DC) programming methods (\cite{dinh1997,thi2018}). We conclude the paper with an application to an example system.

\section{Notation} \label{sec:notation}
The $n\times n$ identity matrix is denoted by $I_{n}$ and the $n \times m$ zero matrix is denoted by $0_{n \times m}$ . We use $A^{\perp}$ to denote a basis matrix of the kernel of the matrix $A$ and the column space of A is denoted by $\mathrm{col}(A)$. The symbol $\otimes$ is used for the Kronecker product. The set of real symmetric matrices of dimension $n \times n$ are denoted by $\mathbb{S}^{n}$, where the superscript is omitted if the dimensions are obvious.
 For a finite sequence $\{x_{k}\}_{k=0}^{N-1}$ of length $N$, $x$ is used to denote either the sequence itself or the stacked vector containing the components of the sequence. The Hankel matrix $H_{L}(x)$ of depth L and the lower-triangular Toeplitz matrix $T_{L}(x)$ for such a sequence $x$ are given by
\begin{equation*}
	H_{L}(x) = \begin{pmatrix}
		{x}_{0} && {x}_{1} && \cdots && {x}_{N-L} \\
		{x}_{1} && {x}_{2} && \cdots && {x}_{N-L+1}\\
		\vdots && \vdots && \ddots && \vdots \\
		{x}_{L-1}&&{x}_{L}&& \cdots && {x}_{N-1}
	\end{pmatrix}, \quad T_{L} (x)= 
	\begin{pmatrix}
		x_{0} & 0 & \cdots & 0 \\
		x_{1} & x_{0}&\ddots & \vdots\\
		\vdots & \vdots & \ddots& 0 \\
		x_{L-1} & x_{L-2} &\cdots& x_{0}
	\end{pmatrix}.
\end{equation*}

\section{Preliminaries} \label{sec:preliminaries}
First, in \refsec{tracjectory_based_representation_of_lti_systems}, we introduce the main result of \cite{willems2005}, which can be used to parametrize all input-output trajectories of a discrete-time (DT) linear time-invariant (LTI) system using only measured data. It is explained in detail for state-space systems in \cite{berberich2020} and proven in \cite{vanWaarde2020}. In \cite{romer2019}, a data-driven formulation of dissipativity using this parametrization of trajectories is presented, which builds the foundation for the results in this paper and which is outlined in \refsec{data_based_dissipativity_characterization}.

\subsection{Trajectory-based representation of LTI systems} \label{sec:tracjectory_based_representation_of_lti_systems}
The following definition gives a precise meaning to a trajectory of an LTI system.
\begin{definition}
	An input-output sequence $\{u_{k},y_{k}\}_{k=0}^{N-1}$ with $u_{k} \in \mathbb{R}^{m}$, $y_{k} \in \mathbb{R}^{p}$ is a trajectory of a DT LTI system $G$, if there exists an initial condition $\hat{x} \in \mathbb{R}^{n}$ and a state sequence $\{x_{k}\}_{k=0}^{N}$ such that
	\begin{equation*}
			x_{k+1} = A x_{k} + Bu_{k}, \quad x_{0} = \hat{x}, \quad y_{k} =  Cx_{k} + Du_{k}
	\end{equation*}
	for $k=0,1,\hdots,N-1$, where (A,B,C,D) is a minimal realization of G.
	\label{def:trajectory_of_lti}
\end{definition}
An essential assumption for the following results is that the input sequence of the measured trajectory is persistently exciting, as characterized via following definition.
\begin{definition}
	A sequence $\{c_{k}\}_{k=0}^{N-1}$, $c_{k} \in \mathbb{R}^{q}$, is persistently exciting of order L, if $\text{rank}(H_{L}(c)){=}qL$.
\label{def:persistence_of_excitation}
\end{definition}
The following result, originally proven in \cite{willems2005} and reformulated in \cite{berberich2020}, provides a parametrization of all trajectories of an unknown system using only one measured data trajectory.
\begin{theorem}[{\cite{willems2005}}]
	Suppose $\{ u_{k},y_{k}\}_{k=0}^{N-1}$ is a trajectory of an LTI system $G$, where $u$ is persistently exciting of order $L+n$. Then, $\{ \bar{u}_{k},\bar{y}_{k}\}_{k=0}^{L-1}$ is a trajectory of $G$ if and only if there exists $\alpha \in \mathbb{R}^{N-L-1}$             
	such that
	\begin{equation}
		H_{L}(u,y) \alpha = \begin{pmatrix}\bar{u} \\ \bar{y} \end{pmatrix}, \quad \text{where} \quad H_{L}(u,y) = \begin{pmatrix}H_{L}(u) \\ H_{L}(y)\end{pmatrix}.
	\label{eq:foundation}
	\end{equation}
	\label{theo:data_based_trajectory_representation}
\end{theorem}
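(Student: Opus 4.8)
The plan is to prove the two implications of the equivalence separately. The ``if'' direction needs no persistence of excitation: every column of $H_L(u,y)$ is a window $(u_t,\dots,u_{t+L-1},y_t,\dots,y_{t+L-1})$ of the given data for some $t\in\{0,\dots,N-L\}$, and by time-invariance of $G$ such a window is itself a length-$L$ trajectory of $G$ (the restriction of $\{u_k,y_k\}$ to that window, with initial state $x_t$). Because the state-space relations of \refdef{trajectory_of_lti} are linear in $(\hat x,u,y)$, the set of length-$L$ trajectories of $G$ is a linear subspace of $\mathbb{R}^{(m+p)L}$; hence any linear combination $H_L(u,y)\alpha$ lies in it, i.e.\ satisfies the trajectory property.

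For the ``only if'' direction I would reduce the claim to a rank condition on the Hankel matrix. Let $\mathcal X=\begin{pmatrix}x_0 & x_1 & \cdots & x_{N-L}\end{pmatrix}$ collect the states at the start of each length-$L$ window, let $\mathcal O_L$ be the observability matrix of depth $L$, and let $\mathcal T_L$ be the Toeplitz matrix of the system's impulse response, so that propagating the dynamics along the columns yields $H_L(y)=\mathcal O_L\mathcal X+\mathcal T_L H_L(u)$. Given an arbitrary length-$L$ trajectory $(\bar u,\bar y)$ with initial state $\hat x$, we have $\bar y=\mathcal O_L\hat x+\mathcal T_L\bar u$; therefore, if some $\alpha$ satisfies $H_L(u)\alpha=\bar u$ and $\mathcal X\alpha=\hat x$, then automatically $H_L(y)\alpha=\mathcal O_L\mathcal X\alpha+\mathcal T_L H_L(u)\alpha=\mathcal O_L\hat x+\mathcal T_L\bar u=\bar y$, which is exactly \refeq{foundation}. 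Such an $\alpha$ exists for every pair $(\bar u,\hat x)$ if and only if
\begin{equation*}
 \mathrm{rank}\begin{pmatrix}H_L(u)\\ \mathcal X\end{pmatrix}=mL+n .
\end{equation*}

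It then remains to establish this rank identity from controllability of $(A,B)$ (which holds since the realization is minimal) and persistence of excitation of $u$ of order $L+n$. I would argue by contradiction: if a nonzero row vector $(a_0^\top,\dots,a_{L-1}^\top,b^\top)$ annihilated the matrix, then $z_t:=b^\top x_t+\sum_{i=0}^{L-1}a_i^\top u_{t+i}=0$ for all admissible $t$. Expanding $x_t=A^tx_0+\sum_{j=0}^{t-1}A^{t-1-j}Bu_j$ exhibits $z$ as the response, along the measured trajectory, of a finite-dimensional LTI system of order at most $n$ driven by $u$ (reading up to $L-1$ steps ahead). A dimension count using persistence of excitation of order $L+n$ then forces all coefficients of this auxiliary system to vanish: the coefficients $a_0,\dots,a_{L-1}$ directly, and the coefficients $b^\top A^kB$, which by controllability of $(A,B)$ gives $b=0$; hence $(a_0,\dots,a_{L-1},b)=0$, a contradiction. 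With the rank identity in hand, both the equality between $\mathrm{col}(H_L(u,y))$ and the subspace of all length-$L$ trajectories of $G$, and \refeq{foundation}, follow.

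The main obstacle is this last step, the rank lemma. Everything else is linear bookkeeping, but turning the left-kernel vector into a contradiction requires (i) eliminating the state in favour of the inputs, which is precisely where controllability enters, and (ii) matching the order of the resulting linear relation among shifted inputs to exactly $L+n$, so that persistence of excitation of that order is the right --- and tight --- hypothesis. A careful treatment typically passes to an extended state-space representation to make this bookkeeping transparent; I would follow that route.
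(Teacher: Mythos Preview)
The paper does not prove this theorem: it is quoted as Willems' Fundamental Lemma, with the proof deferred to \cite{willems2005} and, in the state-space setting, to \cite{vanWaarde2020} and \cite{berberich2020}. There is thus no in-paper proof to compare against.

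That said, your outline is precisely the standard state-space argument of \cite{vanWaarde2020}: the ``if'' direction via linearity of the set of length-$L$ trajectories, and the ``only if'' direction via the rank identity $\mathrm{rank}\begin{psmallmatrix}H_L(u)\\ \mathcal X\end{psmallmatrix}=mL+n$, which you correctly identify as the crux. One comment on your sketch of that lemma: expanding $x_t$ all the way back to $x_0$ is less convenient than iterating the annihilation relation forward, i.e., replacing $b^\top$ successively by $b^\top A,\dots,b^\top A^{n}$ via $x_{t+1}=Ax_t+Bu_t$ and then eliminating the state with Cayley--Hamilton. This directly yields a left-kernel vector of $H_{L+n}(u)$, whose vanishing (by persistence of excitation of order $L+n$) forces the $a_i$ to zero and $b^\top A^kB=0$ for $k=0,\dots,n-1$, hence $b=0$ by controllability. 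Your plan to ``pass to an extended state-space representation'' amounts to the same thing.
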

Based on one persistently exciting trajectory, it is possible to build any other trajectory of that system by taking linear combinations of the measured trajectory and time-shifts thereof. In other words, the stacked Hankel matrix in \refeq{foundation} spans the whole space of trajectories of length $L$ of the system $G$.

\subsection{Data-based dissipativity characterization} \label{sec:data_based_dissipativity_characterization}
In this section, we state the results of \cite{romer2019}, where \reftheo{data_based_trajectory_representation} is used to verify dissipativity properties of general multi-input multi-output DT LTI systems. The following definition of dissipativity from \cite{hill1980} is a common definition in the input-output context.
\begin{definition}
	The system $G$ is dissipative w.r.t. the supply rate $\Pi = \begin{pmatrix}Q & S \\ S^{\top} & R\end{pmatrix}$ with $Q \in \mathbb{S}^{m}$, $R \in \mathbb{S}^{p}$ and $S \in \mathbb{R}^{m \times p}$, if
	\begin{equation}
			\sum_{k=0}^{r} \begin{pmatrix} u_{k} \\ y_{k}\end{pmatrix}^{\top} \Pi \begin{pmatrix} u_{k} \\ y_{k}\end{pmatrix} \geq 0, \quad \forall r \geq 0
	\label{eq:input_output_dissipativity}
	\end{equation}
	for all trajectories $\{u_{k}, y_{k}\}_{k=0}^{\infty}$ of $G$ with initial condition $x_{0}=0$, where $x$ is the state of an arbitrary minimal realization of $G$.
	\label{def:dissipativity}
\end{definition}
Dissipation inequalities with quadratic supply rates represent for example the $L_{2}$-gain of a system or passivity. This input-output definition is equivalent to the well-known dissipativity definition by (\cite{willems1972i,willems1972ii}) for controllable LTI systems (compare proofs in \cite{hill1980}), and is therefore a natural starting point to infer dissipativity properties using only input-output trajectories. Since we only consider finite input-output data trajectories, we use a relaxed version of \refdef{dissipativity}, the finite-horizon dissipativity characterization called $L$-dissipativity as introduced in \cite{maupong2017}. In more detail, we consider inequality \refeq{input_output_dissipativity} only for $r < L$ (cf. \cite{romer2019}, Definition 4), which can in the LTI-case be equivalently formulated via the following result.
\begin{proposition}[{\cite{romer2019}, Proposition 1}]
	The system $G$ is L-dissipative w.r.t. the supply rate $\Pi$ if and only if
	\begin{equation}
			\sum_{k=0}^{L-1} \begin{pmatrix} u_{k} \\ y_{k}\end{pmatrix}^{\top} \Pi \begin{pmatrix} u_{k} \\ y_{k}\end{pmatrix} \geq 0
	\label{eq:L-dissipativity_cond}
	\end{equation}
	for all trajectories $\{u_{k}, y_{k}\}_{k=0}^{L-1}$ of $G$ with initial condition $x_{0}=0$, where $x$ is the state of an arbitrary minimal realization of $G$.
	\label{prop:L-dissipativity}
\end{proposition}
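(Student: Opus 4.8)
The plan is to establish the two implications separately. The direction ``$L$-dissipativity $\Rightarrow$ \refeq{L-dissipativity_cond}'' is immediate: by definition $L$-dissipativity requires \refeq{input_output_dissipativity} for \emph{every} $r<L$, so in particular for $r=L-1$; and any length-$L$ trajectory with $x_{0}=0$ is the truncation of some infinite trajectory with $x_{0}=0$ (extend the state sequence through $x_{k+1}=Ax_{k}+Bu_{k}$ with, say, zero inputs after time $L-1$), whence \refeq{L-dissipativity_cond} is exactly the $r=L-1$ instance of \refeq{input_output_dissipativity} for that extension. The content of the proposition is therefore the converse.

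For ``\refeq{L-dissipativity_cond} $\Rightarrow$ $L$-dissipativity'', I would fix an arbitrary $r\in\{0,\dots,L-1\}$ and an arbitrary trajectory $\{u_{k},y_{k}\}_{k=0}^{r}$ of $G$ with $x_{0}=0$ and associated state sequence $\{x_{k}\}_{k=0}^{r+1}$, and aim to show $\sum_{k=0}^{r}\begin{psmallmatrix}u_{k}\\ y_{k}\end{psmallmatrix}^{\top}\Pi\begin{psmallmatrix}u_{k}\\ y_{k}\end{psmallmatrix}\ge0$. The key idea is to embed this short trajectory into a length-$L$ trajectory by prepending $L-1-r$ samples of zero input. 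Concretely, fix a minimal realization $(A,B,C,D)$ of $G$ and set $\tilde u_{k}=0$ for $0\le k\le L-2-r$ and $\tilde u_{k}=u_{k-(L-1-r)}$ for $L-1-r\le k\le L-1$, with state $\{\tilde x_{k}\}_{k=0}^{L}$ generated by $\tilde x_{0}=0$, $\tilde x_{k+1}=A\tilde x_{k}+B\tilde u_{k}$, and output $\tilde y_{k}=C\tilde x_{k}+D\tilde u_{k}$. Since $\tilde x_{0}=0$ and the input is zero on the preamble, the state equation forces $\tilde x_{k}=0$ for $0\le k\le L-1-r$; in particular $\tilde x_{L-1-r}=0=x_{0}$, so by time-invariance $\tilde x_{L-1-r+j}=x_{j}$ for $0\le j\le r+1$, and $\{\tilde u_{k},\tilde y_{k}\}_{k=L-1-r}^{L-1}$ is exactly the time shift of $\{u_{k},y_{k}\}_{k=0}^{r}$. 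Hence $\{\tilde u_{k},\tilde y_{k}\}_{k=0}^{L-1}$ is a genuine trajectory of $G$ with zero initial state, and $\tilde y_{k}=0$ for $0\le k\le L-2-r$.

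Applying \refeq{L-dissipativity_cond} to $\{\tilde u_{k},\tilde y_{k}\}_{k=0}^{L-1}$, all terms with index $k\le L-2-r$ vanish because $\begin{psmallmatrix}\tilde u_{k}\\ \tilde y_{k}\end{psmallmatrix}=0$ there, so the sum collapses to $\sum_{j=0}^{r}\begin{psmallmatrix}u_{j}\\ y_{j}\end{psmallmatrix}^{\top}\Pi\begin{psmallmatrix}u_{j}\\ y_{j}\end{psmallmatrix}\ge0$, which is the claim. Since $r$ and the trajectory were arbitrary, $G$ is $L$-dissipative.

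I do not expect a genuine obstacle here: the only point needing care is verifying that the zero-input preamble indeed yields a bona fide trajectory of $G$ with $x_{0}=0$ whose supply-rate sum coincides with the desired partial sum. This step is precisely where linearity (a zero input applied from the zero state keeps the state at zero) and time-invariance of $G$ enter, which explains why the equivalence is stated in the LTI case.
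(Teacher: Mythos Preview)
The paper does not actually prove this proposition; it is quoted from \cite{romer2019} and taken as a preliminary result without argument. So there is nothing in the present paper to compare your approach against.

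That said, your proof is correct and is the natural argument one would expect in \cite{romer2019}. The forward direction is indeed trivial (take $r=L-1$ after extending to an infinite trajectory), and for the converse your zero-prepending construction is exactly the right move: linearity ensures the state stays at zero during the preamble, time-invariance lets you identify the tail with the original short trajectory, and the supply-rate terms on the preamble vanish so that \refeq{L-dissipativity_cond} collapses to the desired partial sum. Note that the alternative of \emph{appending} zero inputs after time $r$ would not work, since $x_{r+1}$ is generically nonzero and the extra output terms would contaminate the sum; your choice to prepend is the essential observation.
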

Under weak assumptions, taking $L \to \infty$, $L$-dissipativity is equivalent to infinite-horizon dissipativity (see \cite{koch2020}). While the well-known interconnection guarantees do generally not hold without taking $L \to \infty$, good results can be achieved in practice when taking $L$ large enough.

In the following, the main result of \cite{romer2019} is stated. It combines \reftheo{data_based_trajectory_representation} and \refprop{L-dissipativity}, which results in a data-based characterization of $L$-dissipativity. More precisely, the trajectories in \refeq{L-dissipativity_cond} are parametrized using the stacked Hankel matrices in \refeq{foundation}. To handle the fact that \refeq{L-dissipativity_cond} has to hold only for trajectories with zero initial conditions, we define the matrix
\begin{equation}
	\tilde{V}^{\nu}_{L} = 
	\begin{pmatrix}
		I_{m\nu} & 0_{m\nu \times m(L-\nu)} & 0_{m\nu \times p\nu} & 0_{m\nu \times p(L-\nu)} \\
		0_{p\nu \times m\nu} & 0_{p\nu \times m(L-\nu)} & I_{p\nu} & 0_{p\nu \times p(L-\nu)}
	\end{pmatrix}
\label{eq:V_L_nu}
\end{equation}
for some positive integer $\nu \leq L$. This implies that for any trajectory $\{u_{k},y_{k}\}_{k=0}^{L-1}$ of length $L$, $\tilde{V}^{\nu}_{L}\begin{pmatrix}u^{\top} & y^{\top} \end{pmatrix}^{\top} = 0$ if and only if $u_{0} = \hdots = u_{\nu-1} = 0$ and $y_{0} = \hdots = y_{\nu-1} = 0$. Therefore, the space of all trajectories of length $L$ with the first $\nu$ entries equal to zero is equal to the image of $H_L(u,y)V_L^\nu(u,y)$, where  $V_L^\nu(u,y)=(\tilde{V}^\nu_LH_L(u,y))^\perp$. For rewriting the sum in \refeq{L-dissipativity_cond} as a vector-matrix product, we define $\Pi_{L} = \begin{pmatrix}I_{L} \otimes Q & I_{L} \otimes S \\ I_{L} \otimes S^{\top} & I_{L} \otimes R \end{pmatrix}=\begin{pmatrix}	Q_{L} & S_{L} \\	S_{L}^{\top} & R_{L}\end{pmatrix}$. Using these definitions, the following theorem provides a data-based characterization of finite-horizon dissipativity.
\begin{theorem}[{\cite{romer2019},Theorem 2}]
	Suppose $\{u_{k},y_{k}\}_{k=0}^{N-1}$ is a trajectory of an LTI system $G$, where u is persistently exciting of order $L+n$. Then, for every $\nu$ with $n \leq \nu < L$, the system $G$ is $(L-\nu)$-dissipative w.r.t. the supply rate $\Pi$ if and only if 
		\begin{equation}
			V_{L}^{\nu}(u,y)^{\top} H_{L}(u,y)^{\top} \Pi_{L} H_{L}(u,y) V_{L}^{\nu}(u,y) \succcurlyeq 0.
		\label{eq:L-dissipative_data}
		\end{equation}
	\label{theo:data_based_L-dissipativity}
\end{theorem}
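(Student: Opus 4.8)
The plan is to obtain the claimed equivalence by composing \refprop{L-dissipativity} with \reftheo{data_based_trajectory_representation}, the link between the two being an elementary correspondence between two families of trajectories. First I would apply \refprop{L-dissipativity} with $L$ replaced by $L-\nu$ (meaningful since $\nu<L$): the system $G$ is $(L-\nu)$-dissipative w.r.t.\ $\Pi$ if and only if $\sum_{k=0}^{L-\nu-1}\begin{psmallmatrix}u_k\\y_k\end{psmallmatrix}^{\top}\Pi\begin{psmallmatrix}u_k\\y_k\end{psmallmatrix}\geq0$ for every length-$(L-\nu)$ trajectory of $G$ with zero initial state. Note that ``zero initial state'' is independent of the chosen minimal realization, since any two minimal realizations are related by a state-space similarity transformation, which maps the zero state to the zero state.

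The central step is to identify these length-$(L-\nu)$ trajectories with the length-$L$ trajectories of $G$ whose first $\nu$ input and output samples all vanish, the identification being ``prepend, respectively delete, $\nu$ zeros''. One direction is immediate: prepending $\nu$ zeros to a length-$(L-\nu)$ trajectory with zero initial state again yields a trajectory of $G$, since running a minimal realization from the zero state with zero input keeps state and output at zero for $\nu$ steps, after which the continuation reproduces the given trajectory. For the converse, consider any length-$L$ trajectory of $G$ with $u_0=\dots=u_{\nu-1}=0$ and $y_0=\dots=y_{\nu-1}=0$; since the input vanishes on the first $\nu$ steps, the state there satisfies $x_k=A^kx_0$, so the output constraint gives $CA^kx_0=0$ for $k=0,\dots,\nu-1$, and because $\nu\geq n$ with $(A,C)$ observable the associated observability matrix has full column rank, forcing $x_0=0$ and hence $x_\nu=0$. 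Thus the restriction of the trajectory to its last $L-\nu$ steps is a length-$(L-\nu)$ trajectory with zero initial state, and the two maps are mutually inverse. Moreover, under this constraint the first $\nu$ summands of $\sum_{k=0}^{L-1}\begin{psmallmatrix}u_k\\y_k\end{psmallmatrix}^{\top}\Pi\begin{psmallmatrix}u_k\\y_k\end{psmallmatrix}=\begin{psmallmatrix}u\\y\end{psmallmatrix}^{\top}\Pi_L\begin{psmallmatrix}u\\y\end{psmallmatrix}$ vanish, so the dissipation sum over the short trajectory coincides with the full quadratic form $\begin{psmallmatrix}u\\y\end{psmallmatrix}^{\top}\Pi_L\begin{psmallmatrix}u\\y\end{psmallmatrix}$ over the corresponding long trajectory.

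It then remains to parametrize the length-$L$ trajectories with vanishing first $\nu$ samples. By \reftheo{data_based_trajectory_representation}, persistency of excitation of order $L+n$ ensures that the stacked length-$L$ trajectory vectors are exactly $\mathrm{col}(H_L(u,y))$; imposing the constraint amounts to $\tilde{V}_L^\nu H_L(u,y)\alpha=0$, i.e.\ to $\alpha\in\mathrm{col}\big((\tilde{V}_L^\nu H_L(u,y))^{\perp}\big)=\mathrm{col}(V_L^\nu(u,y))$, so the admissible trajectory vectors form precisely $\mathrm{col}(H_L(u,y)V_L^\nu(u,y))$ (as already observed in the text preceding the theorem). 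Substituting $\begin{psmallmatrix}u\\y\end{psmallmatrix}=H_L(u,y)V_L^\nu(u,y)\beta$ and demanding $\begin{psmallmatrix}u\\y\end{psmallmatrix}^{\top}\Pi_L\begin{psmallmatrix}u\\y\end{psmallmatrix}\geq0$ for every $\beta$ is exactly the matrix inequality \refeq{L-dissipative_data}, which closes the chain of equivalences.

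I expect the main obstacle to be the converse half of the correspondence in the second step, namely the argument that $\nu\geq n$ leading zero samples force the state to vanish at time $\nu$, together with the bookkeeping needed to confirm that ``zero initial condition'' is realization-independent so that \refprop{L-dissipativity} may be invoked for the truncated trajectory. The rest --- prepending zeros, rewriting the dissipation sum as $\begin{psmallmatrix}u\\y\end{psmallmatrix}^{\top}\Pi_L\begin{psmallmatrix}u\\y\end{psmallmatrix}$, and turning nonnegativity of a quadratic form over a subspace into a congruence-transformed matrix inequality --- is routine.
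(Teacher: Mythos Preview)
The paper does not actually supply a proof of this theorem; it is quoted verbatim from \cite{romer2019} as background in \refsec{data_based_dissipativity_characterization}, and the text moves on immediately after stating \refeq{L-dissipative_data}. There is thus no proof in the present paper to compare against.

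That said, your argument is correct and is exactly the approach the paper implicitly relies on: the paragraph preceding the theorem already records that the image of $H_L(u,y)V_L^\nu(u,y)$ equals the set of length-$L$ trajectories whose first $\nu$ samples vanish, and the ``if'' part of the proof of \refprop{data_based_closed_loop_trajectories} uses the same observability step (prepended zeros with $\nu\geq n$ force zero state at time $\nu$) that you spell out. Your composition of \refprop{L-dissipativity} (applied with horizon $L-\nu$) with \reftheo{data_based_trajectory_representation} via this correspondence is the intended mechanism, and the bookkeeping you flag --- realization-independence of the zero initial condition and the reduction of $\begin{psmallmatrix}u\\y\end{psmallmatrix}^{\top}\Pi_L\begin{psmallmatrix}u\\y\end{psmallmatrix}$ to the $(L-\nu)$-step dissipation sum --- is handled correctly.
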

\reftheo{data_based_L-dissipativity} provides a simple data-based definiteness condition to guarantee $L$-dissipativity. While the result remains true if $\nu \geq n$ is replaced by $\nu \geq \underline{l}$ for the lag $\underline{l}$ of $G$, we use $\nu \geq n$ since $\underline{l}=n$ for observable SISO systems. In the following sections, we use \refeq{L-dissipative_data} to verify closed-loop $L$-dissipativity for a given controller in the standard feedback loop and show how to design a controller for a desired closed-loop dissipativity specification.

\section{Data-driven controller validation for closed-loop dissipativity} \label{sec:data_driven_controller_validation_for_closed_loop_dissipativity}
This section deals with the problem of verifying closed-loop $L$-dissipativity in the standard feedback loop (see \reffig{closed_loop}) for a given controller $K$,
\begin{figure}[h]
	\centering
	\begin{minipage}{0.48\textwidth}
\centering
	\begin{tikzpicture}
\draw[thick, fill=white!10] (0,0.1) rectangle (1.5,0.9);
\draw[thick, fill=white!10] (2.5,0.1) rectangle (4.0,0.9);
\draw[thick,->] (1.5,0.5) -- (2.5,0.5);
\draw[thick,->] (4,0.5) -- (5.5,0.5);
\node[anchor=south east] at (2.25,0.5) {\small $u$};
\node[anchor=south east] at (-1.25,0.5) {\small $r$};
\node[anchor=south east] at (-0.3,0.5) {\small $e$};
\node[anchor=south east] at (5.5,0.5) {\small $z$};

\draw[thick] (-1.0,0.5) circle(0.1);
\node[anchor=north east] at (-1,0.5) {\small $-$};

\draw[thick,->] (-1.75,.5) -- (-1.1,0.5);
\draw[thick,->] (-1,-.5) -- (-1,0.4);
\draw[thick,-] (4.75,-0.5) -- (4.75,0.5);
\draw[thick,->] (-0.9,0.5) -- (0,0.5);
\draw[thick,-] (-1,-.5) -- (4.75,-.5);
%\draw[->] (3.5,-1.5) -- (2.6,-1.5);

\node at (3.25,0.5) {\small $G$};
\node at (0.75,0.5) {\small $K$};
\end{tikzpicture}
	\caption{Standard feedback loop.}
	\label{fig:closed_loop}
\end{minipage}
\begin{minipage}{0.48\textwidth}
	\centering
	\begin{tikzpicture}
\draw[thick, fill=white!10] (0,0.1) rectangle (1.5,0.9);
\draw[thick, fill=white!10] (2.5,0.1) rectangle (4.0,0.9);
\draw[thick,->] (1.5,0.5) -- (2.5,0.5);
\draw[thick,->] (4,0.5) -- (5.5,0.5);
\node[anchor=south east] at (2.25,0.5) {\small $w$};
\node[anchor=south east] at (-1.25,0.5) {\small $r$};
\node[anchor=south east] at (-0.3,0.5) {\small $e$};
\node[anchor=south east] at (5.5,0.5) {\small $z$};

\draw[thick] (-1.0,0.5) circle(0.1);
\node[anchor=north east] at (-1,0.5) {\small $-$};

\draw[thick,->] (-1.75,.5) -- (-1.1,0.5);
\draw[thick,->] (-1,-.5) -- (-1,0.4);
\draw[thick,-] (4.75,-0.5) -- (4.75,0.5);
\draw[thick,->] (-0.9,0.5) -- (0,0.5);
\draw[thick,-] (-1,-.5) -- (4.75,-.5);
%\draw[->] (3.5,-1.5) -- (2.6,-1.5);

\node at (3.25,0.5) {\small $K$};
\node at (0.75,0.5) {\small $G$};
\end{tikzpicture}
	\caption{Interchanged standard feedback loop.}
	\label{fig:closed_loop_interchanged}
\end{minipage}
\end{figure}
without knowledge of a mathematical model of the plant $G$. It is assumed that the controller $K$ as well as the plant $G$ are both SISO LTI systems. Furthermore, a model of the controller is given via its impulse response $\{a_{k}\}_{k=0}^{L-\nu-1}$ of length $L-\nu$, and an open-loop measured trajectory $\{u_{k},y_{k}\}_{k=0}^{N-1}$ with persistently exciting input signal of the system $G$ is available. Based on these ingredients, necessary and sufficient conditions for closed-loop $L$-dissipativity for the channels $r \mapsto z$, $r \mapsto e$ and $r \mapsto u$ are presented. Before stating our main result, we provide a characterization of all closed-loop trajectories with zero initial conditions. To this end, we need to ensure that the feedback loop is well-posed.
\begin{definition}[{\cite{daleh2004}}]
	The standard feedback loop (\reffig{closed_loop}) is well-posed if all signals $e$, $u$ and $z$ in the feedback loop are uniquely defined for every choice of the system state variables for both the controller and the plant and every choice of the external input r.
	\label{def:well-posedness}
\end{definition}
\refdef{well-posedness} translates into the following Proposition in the standard feedback loop.
\begin{proposition}[{\cite{daleh2004}}]
	Suppose $(A,B,C,D)$ is a realization of the SISO LTI plant $G$ and $(A_{\mathrm{c}},B_{\mathrm{c}},C_{\mathrm{c}},D_{\mathrm{c}})$ is a realization of the SISO LTI  controller $K$. Then, the feedback loop is well-posed if and only if $1+DD_{\mathrm{c}} \neq 0$.
	\label{prop:well-posedness}
\end{proposition}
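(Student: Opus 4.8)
The plan is to reduce well-posedness in the sense of \refdef{well-posedness} to the unique solvability of a single scalar linear equation. Fix a time instant and regard the plant state $x$, the controller state $x_{\mathrm{c}}$, and the external input $r$ as given; the signals to be determined are $e$, $u$, $z$. Reading off \reffig{closed_loop} with the negative-feedback sign convention, the interconnection equations are $u = C_{\mathrm{c}} x_{\mathrm{c}} + D_{\mathrm{c}} e$, $z = C x + D u$, and $e = r - z$. First I would substitute the expression for $u$ into that for $z$ to get $z = C x + D C_{\mathrm{c}} x_{\mathrm{c}} + D D_{\mathrm{c}} e$, and then insert this into $e = r - z$ to obtain the algebraic loop equation
\begin{equation*}
	(1 + D D_{\mathrm{c}})\, e = r - C x - D C_{\mathrm{c}} x_{\mathrm{c}}.
\end{equation*}

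Since $G$ and $K$ are SISO, all quantities here are scalars, so this equation has a unique solution $e$ for every right-hand side — equivalently, for every choice of $x$, $x_{\mathrm{c}}$ and $r$ — if and only if $1 + D D_{\mathrm{c}} \neq 0$. In that case $u$ and $z$, and hence the state updates $A_{\mathrm{c}} x_{\mathrm{c}} + B_{\mathrm{c}} e$ and $A x + B u$, are determined by back-substitution without any further algebraic constraint, which establishes the ``if'' direction. For the converse I would argue by contraposition: if $1 + D D_{\mathrm{c}} = 0$, then choosing e.g.\ $x = 0$, $x_{\mathrm{c}} = 0$ and $r = 1$ turns the loop equation into $0 = 1$, so no consistent signal $e$ exists for this choice of states and input, and the feedback loop is not well-posed.

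I do not expect a genuine obstacle here: the argument is a direct finite-dimensional elimination, and the SISO assumption is exactly what makes the coupling ``matrix'' the scalar $1 + D D_{\mathrm{c}}$, so that invertibility coincides with non-vanishing. The only points that require a little care are bookkeeping the sign convention of the interconnection correctly and verifying that, once $e$ is pinned down, \emph{all} remaining loop signals are uniquely determined, so that failure of well-posedness can only come from the scalar factor $1 + D D_{\mathrm{c}}$.
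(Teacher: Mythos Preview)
Your argument is correct and is precisely the standard derivation: eliminate $u$ and $z$ to obtain the scalar algebraic loop equation $(1+DD_{\mathrm{c}})e = r - Cx - DC_{\mathrm{c}}x_{\mathrm{c}}$, whose unique solvability for all right-hand sides is equivalent to $1+DD_{\mathrm{c}}\neq 0$, and then recover the remaining signals by back-substitution. The contrapositive with $x=0$, $x_{\mathrm{c}}=0$, $r=1$ cleanly handles the ``only if'' direction.

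Note, however, that the paper does not supply its own proof of this proposition at all: it is stated as a cited result from \cite{daleh2004} and used as a black box. So there is no paper proof to compare against; your write-up simply fills in the textbook argument that the citation points to.
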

Furthermore, well-posedness also guarantees the existence of a well-defined closed-loop state-space realization \cite{daleh2004}. Well-posedness can indeed be checked in our data-driven framework via \refprop{well-posedness}, since the feedthrough term of the plant can be calculated from the given data. To this purpose, one can use \reftheo{data_based_trajectory_representation} and $V_{L}^{\nu}(u,y)$ with $\nu \geq n$ to construct an input-output trajectory with zero initial conditions of $G$ and simply divide the output at the first time step by the input. The following result provides a data-based characterization of all closed-loop trajectories.
\begin{proposition}
	Suppose the standard feedback loop in \reffig{closed_loop} is well-posed and $\{u_{k},y_{k}\}_{k=0}^{N-1}$ is a trajectory of $G$, where u is persistently exciting of order $L+n$. Then, for any $\nu$ with $n \leq \nu < L$, $\{r_{k},z_{k}\}_{k=0}^{L-\nu-1}$ is a closed-loop trajectory of the standard feedback loop with zero initial conditions if and only if there exists a vector $\beta \in \mathbb{R}^{\mathrm{dim}(\mathrm{col}(V_{L}^{\nu}(u,y)))}$ such that
	\begin{equation}
		M_{L-\nu}(a) J H_{L}(u,y) V_{L}^{\nu}(u,y) \beta = \begin{pmatrix} r \\ z \end{pmatrix},
		\label{eq:closed_loop_trajectories}
	\end{equation}
where $M_{L-\nu}(a) = \begin{pmatrix}I_{L-\nu} & T_{L-\nu}(a) \\ 0_{(L-\nu) \times (L-\nu)} & T_{L-\nu}(a)\end{pmatrix}$, $J = \begin{pmatrix} J_{L}^{\nu} & 0_{(L-\nu) \times L} \\ 0_{(L-\nu) \times L} & J_{L}^{\nu}\end{pmatrix}$ with \\ $J_{L}^{\nu} = \begin{pmatrix} 0_{(L-\nu) \times \nu} & I_{L-\nu} \end{pmatrix}$ and $\{a_{k}\}_{k=0}^{L-\nu-1}$ is the impulse response of $K$.
\label{prop:data_based_closed_loop_trajectories}
\end{proposition}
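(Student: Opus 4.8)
The plan is to establish the equivalence in three steps: first reformulate, in purely system-theoretic terms, what it means for $\{r_k,z_k\}_{k=0}^{L-\nu-1}$ to be a closed-loop trajectory with zero initial conditions, rewriting it in terms of an auxiliary open-loop trajectory of $G$; then parametrize those open-loop trajectories directly from the data via \reftheo{data_based_trajectory_representation}; and finally combine the two and read off the matrix appearing in \refeq{closed_loop_trajectories}.

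For the first step, I would exploit that with zero initial conditions every SISO LTI block acts as a finite-horizon convolution, so on the horizon $k=0,\dots,L-\nu-1$ the standard feedback loop of \reffig{closed_loop} is described by $e=r-z$, $u_{\mathrm c}=T_{L-\nu}(a)e$ and $z=T_{L-\nu}(g)u_{\mathrm c}$, where $u_{\mathrm c}$ is the controller output, $a$ the impulse response of $K$ and $g$ the (unknown) impulse response of $G$. The key observation is that one should not parametrize the loop through the physical plant trajectory $(u_{\mathrm c},z)$ — which would require inverting $T_{L-\nu}(a)$ and fails whenever $a_0=D_{\mathrm c}=0$ — but through the virtual trajectory $(e,w)$ of $G$ with $w:=T_{L-\nu}(g)e$. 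Since truncated convolutions of scalar sequences commute, $T_{L-\nu}(g)T_{L-\nu}(a)=T_{L-\nu}(a)T_{L-\nu}(g)$, one gets $z=T_{L-\nu}(g)T_{L-\nu}(a)e=T_{L-\nu}(a)w$ and hence $r=e+z=e+T_{L-\nu}(a)w$, i.e. $\begin{pmatrix}r\\ z\end{pmatrix}=M_{L-\nu}(a)\begin{pmatrix}e\\ w\end{pmatrix}$. Conversely, starting from any length-$(L-\nu)$ trajectory $(e,w)$ of $G$ with zero initial conditions and setting $z:=T_{L-\nu}(a)w$, $r:=e+z$, $u_{\mathrm c}:=T_{L-\nu}(a)e$, the same commutation shows that $e=r-z$, $u_{\mathrm c}=T_{L-\nu}(a)e$ and $z=T_{L-\nu}(g)u_{\mathrm c}$, so $(r,e,u_{\mathrm c},z)$ is a genuine closed-loop trajectory with zero initial conditions, and by well-posedness (\refprop{well-posedness}) it is the unique one for this $r$. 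Thus $\{r_k,z_k\}_{k=0}^{L-\nu-1}$ is a closed-loop trajectory with zero initial conditions iff there is a length-$(L-\nu)$ trajectory $\{e_k,w_k\}_{k=0}^{L-\nu-1}$ of $G$ with zero initial conditions such that $\begin{pmatrix}r\\ z\end{pmatrix}=M_{L-\nu}(a)\begin{pmatrix}e\\ w\end{pmatrix}$.

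For the second step, \reftheo{data_based_trajectory_representation} gives that the columns of $H_L(u,y)$ span all length-$L$ trajectories of $G$; restricting $\alpha$ to $\mathrm{col}(V_L^\nu(u,y))=\ker(\tilde V_L^\nu H_L(u,y))$ with $\tilde V_L^\nu$ as in \refeq{V_L_nu} then selects exactly those trajectories whose first $\nu$ input and output samples vanish, which is precisely the characterization recalled just before \refeq{V_L_nu}. Since $\nu\ge n$ and a minimal SISO realization is observable with lag $n$, the vanishing of the first $\nu$ input and output samples forces the state at time $\nu$ to be zero; therefore the map $J$, which keeps the samples indexed $\nu,\dots,L-1$, sends this set onto the set of all length-$(L-\nu)$ trajectories of $G$ with zero initial conditions, the reverse inclusion following by prepending $\nu$ zeros to such a trajectory. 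Hence $\{e_k,w_k\}_{k=0}^{L-\nu-1}$ is a trajectory of $G$ with zero initial conditions iff $\begin{pmatrix}e\\ w\end{pmatrix}=JH_L(u,y)V_L^\nu(u,y)\beta$ for some $\beta$. Substituting this into the identity from the first step yields $\begin{pmatrix}r\\ z\end{pmatrix}=M_{L-\nu}(a)\,JH_L(u,y)V_L^\nu(u,y)\beta$, which is exactly \refeq{closed_loop_trajectories}; running the chain of equivalences backwards gives the ``if'' direction.

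I expect the routine parts to be the second step — a direct application of \reftheo{data_based_trajectory_representation} together with the observability/lag bookkeeping already used for \reftheo{data_based_L-dissipativity} — and the final substitution. The main obstacle is the first step: recognizing that the object to feed into the data-based parametrization is the virtual plant trajectory $(e,Ge)$ rather than the physical one $(u_{\mathrm c},z)$, so that $r$ and $z$ are recovered by multiplying with the controller Toeplitz matrix $T_{L-\nu}(a)$ and never with its inverse, which need not exist. This hinges on the commutativity of the truncated convolutions $T_{L-\nu}(g)$ and $T_{L-\nu}(a)$ and on a careful check that the reconstructed signals $(r,e,u_{\mathrm c},z)$ satisfy all loop equations with zero initial conditions, which is where well-posedness (\refprop{well-posedness}) is needed for uniqueness.
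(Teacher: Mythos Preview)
Your proposal is correct and follows essentially the same approach as the paper: both parametrize the closed loop through the virtual plant trajectory $(e,Ge)$ rather than the physical one $(u_{\mathrm c},z)$, invoke the commutativity of SISO LTI maps with zero initial conditions to recover $(r,z)$ via $M_{L-\nu}(a)$, and then apply \reftheo{data_based_trajectory_representation} together with $V_L^\nu(u,y)$ and the truncation $J$ to characterize the zero-initial-condition trajectories of $G$. The only cosmetic difference is that the paper phrases the commutativity step via the interchanged feedback loop of \reffig{closed_loop_interchanged} (swapping $G$ and $K$), whereas you write it out algebraically as $T_{L-\nu}(g)T_{L-\nu}(a)=T_{L-\nu}(a)T_{L-\nu}(g)$; the content is identical.
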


\begin{proof}
	\textbf{if:} For a fixed $\beta \in \mathbb{R}^{\mathrm{dim}(\mathrm{col}(V_{L}^{\nu}(u,y)))}$
\begin{equation}
	H_{L}(u,y) V_{L}^{\nu}(u,y)\beta = \begin{pmatrix} \bar{u} \\ \bar{y} \end{pmatrix}
	\label{eq:cl_trajectory_representation_proof_1}
\end{equation}
is a trajectory $(\bar{u}$, $\bar{y})$ of length $L$ of $G$ with $\bar{u}_{0} = \hdots = \bar{u}_{\nu-1} = 0$ and $\bar{y}_{0} = \hdots = \bar{y}_{\nu-1} = 0$, by using \reftheo{data_based_trajectory_representation} combined with the definition of $V_{L}^{\nu}(u,y)$. Since $\nu \geq n$ by assumption, $(\bar{u},\bar{y})$ is a trajectory of $G$ with $\bar{x}_{k}=0$ for $k=0,\hdots,\nu-1$, where $\bar{x}$ is the corresponding state in some minimal realization. Multiplying \refeq{cl_trajectory_representation_proof_1} with $J$ from the left yields
\begin{equation}
	J H_{L}(u,y) V_{L}^{\nu}(u,y)\beta = \begin{pmatrix} J_{L}^{\nu} & 0_{(L-\nu) \times (L)} \\ 0_{(L-\nu) \times (L)} & J_{L}^{\nu}\end{pmatrix} \begin{pmatrix} \bar{u} \\ \bar{y} \end{pmatrix} = \begin{pmatrix} \hat{u} \\ \hat{y} \end{pmatrix},
	\label{eq:closed_loop_trajectory_parametrization_shortened}
\end{equation}
where $(\hat{u},\hat{y})$ contains the last $L-\nu$ entries of $(\bar{u},\bar{y})$ and is therefore a trajectory of length $L-\nu$ of $G$ with zero initial conditions. Representing the controller $K$ via the Toeplitz matrix $T_{L-\nu}(a)$ implies also zero initial conditions by assumption for the controller. Therefore, using the commutativity of SISO LTI systems with zero initial conditions, the standard feedback loop has the same input-output behavior from $r \mapsto z$ as the transformed loop shown in \reffig{closed_loop_interchanged}. Multiplying \refeq{closed_loop_trajectory_parametrization_shortened} by $M_{L-\nu}(a)$ from the left, we obtain $M_{L-\nu}(a) J H_{L}(u,y) V_{L}^{\nu}(u,y) \beta = M_{L-\nu}(a) \begin{pmatrix} \hat{u}^{\top} & \hat{y}^{\top}  \end{pmatrix}^{\top}$. Then, using the signal definition of the interchanged standard feedback loop in \reffig{closed_loop_interchanged} leads to
\begin{equation}
		M_{L-\nu}(a) \begin{pmatrix} \hat{u} \\ \hat{y}  \end{pmatrix} = M_{L-\nu}(a) \begin{pmatrix} \hat{e} \\ \hat{w} \end{pmatrix}=\begin{pmatrix}\hat{e} + T_{L-\nu}(a) \hat{w} \\ T_{L-\nu}(a) \hat{w}\end{pmatrix} =\begin{pmatrix}\hat{e} + \hat{z} \\ \hat{z}\end{pmatrix} =\begin{pmatrix}\hat{r} \\ \hat{z}\end{pmatrix},
	\label{eq:signal_transformation}
\end{equation}
where $\hat{e}=\hat{u}$, $\hat{w}= \hat{y}$ and $\hat{z}=T_{L-\nu}(a) \hat{w}$ (see \reffig{closed_loop_interchanged}). 
Since the closed loop is well-posed, we can construct a state-space realization of the resulting closed loop by stacking the states of an arbitrary realization of the controller and the states of an arbitrary realization of the plant. As we assumed zero initial conditions for both, the controller and the plant, also the constructed closed-loop realization has zero initial conditions. Therefore, $(\hat{r},\hat{z})$ is a trajectory of length $L-\nu$ of the closed loop with zero initial conditions, which, together with \refeq{signal_transformation}, concludes the "if"-part.

\textbf{Only if:} Suppose $\{\hat{r}_{k},\hat{z}_{k}\}_{k=0}^{L-\nu-1}$ is a closed-loop trajectory of the standard feedback loop (\reffig{closed_loop}) with zero initial conditions. Since a closed-loop state-space realization can be constructed by stacking the individual states of realizations of $K$ and $G$, zero initial conditions for the closed loop imply zero initial conditions for both the controller and the plant. Similar to the if-part, we use the commutativity property and reverse the steps seen in \refeq{signal_transformation} to guarantee the existence of a trajectory $(\hat{u},\hat{y})$ of length $L-\nu$ of $G$ with zero initial conditions, which satisfies \refeq{signal_transformation}. Hence, we can artificially insert zeros to deduce that $\begin{pmatrix}\bar{u}^{\top} & \bar{y}^{\top}\end{pmatrix} = \begin{pmatrix} 0_{1 \times \nu} & \hat{u}^{\top} & 0_{1 \times \nu} & \hat{y}^{\top} \end{pmatrix}$ is a trajectory of length $L$ of $G$. Thus, using \reftheo{data_based_trajectory_representation} and the definition of $V_{L}^{\nu}(u,y)$, there exists a vector $\beta$ satisfying \refeq{closed_loop_trajectories}.
\end{proof}
\begin{remark}
Similar to \refprop{data_based_closed_loop_trajectories}, it is possible to parametrize closed-loop input-output trajectories corresponding to further channels, for instance $r \mapsto e$ (reference to error) or $r \mapsto u$ (reference to control variable). To this purpose, one has to change the matrix $M_{L-\nu}(a)$ in \refeq{closed_loop_trajectories}. For the channels $r \mapsto e$ and $r \mapsto u$, \refprop{data_based_closed_loop_trajectories} holds with
\begin{equation}
	M_{L-\nu}(a) = \begin{pmatrix}I_{L-\nu} & T_{L-\nu}(a) \\ I_{L-\nu} & 0_{(L-\nu) \times (L-\nu)} \end{pmatrix} \quad \text{and} \quad M_{L-\nu}(a) = \begin{pmatrix}I_{L-\nu} & T_{L-\nu}(a) \\ T_{L-\nu}(a) & 0_{(L-\nu) \times (L-\nu)} \end{pmatrix},
\end{equation}
respectively. The proof for these channels goes along the same lines as for $r \mapsto z$.
	\label{rema:feedback_channels}
\end{remark}
\cite{markovsky2010} provides an approach for closed-loop data-driven simulation for a given controller, allowing to compute the closed-loop response for a given input reference. On the other hand \refprop{data_based_closed_loop_trajectories} allows us to parametrize all input-output trajectories (with zero initial conditions) jointly, leading to a natural extension of the open-loop parametrization in \reftheo{data_based_trajectory_representation}. More importantly, the presented parametrization is linear in the controller parameters which is essential for using this result for data-driven controller design. By using our representation in \refprop{data_based_closed_loop_trajectories} of all closed-loop trajectories with zero initial conditions, we can state the following result for verifying closed-loop dissipativity of an unknown plant $G$ with a given controller $K$.
\begin{theorem}
	Suppose the standard feedback loop (\reffig{closed_loop}) is well-posed and $\{u_{k},y_{k}\}_{k=0}^{N-1}$ is a trajectory of $G$, where u is persistently exciting of order $L+n$. Then, for any $\nu$ with $n \leq \nu < L$, the channel $r \mapsto z$ is $(L-v)$-dissipative w.r.t. the supply rate $\Pi$ if and only if
	\begin{equation}
			V_{L}^{\nu}(u,y)^{\top} H_{L}(u,y)^{\top} J^{\top} M_{L-\nu}(a)^{\top} \Pi_{L-\nu} M_{L-\nu}(a) J H_{L}(u,y)  V_{L}^{\nu}(u,y) \succcurlyeq 0,
	\label{eq:closed_loop_dissipative_cond}
	\end{equation}
	where $M_{L-\nu}(a)$ as defined in \refprop{data_based_closed_loop_trajectories}.
	\label{theo:data_based_cl_L-dissipativity}
\end{theorem}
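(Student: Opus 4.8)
The plan is to reduce \reftheo{data_based_cl_L-dissipativity} to a combination of two results already at hand: the finite-horizon dissipativity characterization of \refprop{L-dissipativity}, applied to the closed loop itself, and the data-based parametrization of all zero-initial-condition closed-loop trajectories in \refprop{data_based_closed_loop_trajectories}. Since the feedback loop is well-posed it admits a well-defined closed-loop state-space realization, and the interconnection of the two LTI systems is again LTI, so \refprop{L-dissipativity} applies with $G$ replaced by the closed-loop system $r\mapsto z$: the channel $r\mapsto z$ is $(L-\nu)$-dissipative w.r.t. $\Pi$ if and only if
\begin{equation*}
	\sum_{k=0}^{L-\nu-1}\begin{pmatrix}r_{k}\\ z_{k}\end{pmatrix}^{\top}\Pi\begin{pmatrix}r_{k}\\ z_{k}\end{pmatrix}\geq 0
\end{equation*}
for every closed-loop trajectory $\{r_{k},z_{k}\}_{k=0}^{L-\nu-1}$ with zero initial conditions.

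I would then express this sum as a single quadratic form. Stacking the reference and output samples into the vectors $r$ and $z$ exactly as in \refprop{data_based_closed_loop_trajectories}, the Kronecker structure of $\Pi_{L-\nu}$ gives
\begin{equation*}
	\sum_{k=0}^{L-\nu-1}\begin{pmatrix}r_{k}\\ z_{k}\end{pmatrix}^{\top}\Pi\begin{pmatrix}r_{k}\\ z_{k}\end{pmatrix}=\begin{pmatrix}r\\ z\end{pmatrix}^{\top}\Pi_{L-\nu}\begin{pmatrix}r\\ z\end{pmatrix},
\end{equation*}
where the block ordering of $\Pi_{L-\nu}$ is the one matching the stacking $\begin{pmatrix}r^{\top} & z^{\top}\end{pmatrix}^{\top}$ used in \refprop{data_based_closed_loop_trajectories}. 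By \refprop{data_based_closed_loop_trajectories} this stacked vector ranges over exactly the image of $M_{L-\nu}(a)JH_{L}(u,y)V_{L}^{\nu}(u,y)$, i.e., over $\{M_{L-\nu}(a)JH_{L}(u,y)V_{L}^{\nu}(u,y)\beta : \beta\in\mathbb{R}^{\mathrm{dim}(\mathrm{col}(V_{L}^{\nu}(u,y)))}\}$. Substituting this parametrization, the dissipativity inequality becomes
\begin{equation*}
	\beta^{\top}\bigl(V_{L}^{\nu}(u,y)^{\top}H_{L}(u,y)^{\top}J^{\top}M_{L-\nu}(a)^{\top}\Pi_{L-\nu}M_{L-\nu}(a)JH_{L}(u,y)V_{L}^{\nu}(u,y)\bigr)\beta\geq 0
\end{equation*}
for all $\beta$, which by definition is equivalent to positive semidefiniteness of the bracketed matrix, that is, to \refeq{closed_loop_dissipative_cond}. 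Because \refprop{data_based_closed_loop_trajectories} is an \emph{exact} characterization --- every zero-initial-condition closed-loop trajectory of length $L-\nu$ is captured, and conversely every $\beta$ yields an admissible trajectory --- no quantifier is lost, so both directions of the equivalence follow. The same argument covers the channels $r\mapsto e$ and $r\mapsto u$ after replacing $M_{L-\nu}(a)$ by the matrices from \refrema{feedback_channels}.

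The step I expect to require the most care is bookkeeping rather than anything conceptual. First, one must check that the block ordering of $\Pi_{L-\nu}$ is \emph{literally} consistent with the stacking of $r$ and $z$ fixed in \refprop{data_based_closed_loop_trajectories}, so that the quadratic-form rewriting is exact and no permutation or stray transpose is introduced. Second, one must confirm that ``zero initial conditions'' in the sense of \refprop{L-dissipativity} (vanishing state of a minimal closed-loop realization) describes the same family of trajectories $\{r_{k},z_{k}\}_{k=0}^{L-\nu-1}$ as ``zero initial conditions'' in \refprop{data_based_closed_loop_trajectories} (vanishing state of the realization obtained by stacking the controller and plant states); both amount to the zero-state forced responses, and the passage from a vanishing state to a vanishing of the first $\nu$ input and output samples, used inside \refprop{data_based_closed_loop_trajectories}, is precisely where the hypothesis $\nu\ge n$ is needed. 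Once these alignments are in place, the remaining argument is the standard fact that a quadratic form is nonnegative on the column space of a matrix if and only if its compression to that column space is positive semidefinite --- exactly the final step in the proof of \reftheo{data_based_L-dissipativity}.
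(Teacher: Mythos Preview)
Your proposal is correct and follows exactly the same route as the paper's proof: apply \refprop{L-dissipativity} to the closed loop, rewrite the sum as a quadratic form in $\Pi_{L-\nu}$, and then invoke \refprop{data_based_closed_loop_trajectories} to translate the quantifier over zero-initial-condition trajectories into the matrix inequality \refeq{closed_loop_dissipative_cond}. The paper's argument is more terse, while your extra paragraph on the bookkeeping (block ordering of $\Pi_{L-\nu}$, alignment of the two notions of zero initial conditions, and the role of $\nu\ge n$) spells out steps the paper leaves implicit.
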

\begin{proof}
By using the dissipativity condition \refeq{L-dissipativity_cond} for the horizon $L-\nu$ we get
\begin{equation*}
	\sum_{k=0}^{L-\nu-1} \begin{pmatrix} r_{k} \\ z_{k} \end{pmatrix}^{\top} \Pi \begin{pmatrix} r_{k} \\ z_{k} \end{pmatrix} \geq 0 \quad \text{if and only if} \quad \begin{pmatrix} r \\ z \end{pmatrix}^{\top} \Pi_{L-\nu} \begin{pmatrix} r \\ z \end{pmatrix} \geq 0
\end{equation*}
for all trajectories $(r,z)$ of length $L-\nu$ with zero initial conditions. Furthermore, using \refprop{data_based_closed_loop_trajectories}, this turns out to be equivalent to \refeq{closed_loop_dissipative_cond}.
\end{proof}
\reftheo{data_based_cl_L-dissipativity} provides a validation technique for closed-loop $(L-\nu)$-dissipativity in the standard feedback loop. Only one open-loop measured trajectory of the plant and the model of the controller is needed to verify dissipativity. The condition is very simple in the sense that only the definiteness of a single matrix has to be checked, which can be easily done numerically. Similar to \reftheo{data_based_cl_L-dissipativity} , it is possible to obtain closed-loop dissipativity conditions for other channels by using the appropriate matrix $M_{L-\nu}(a)$ as discussed in \refrema{feedback_channels}. The main advantages of the proposed method compared to applying \reftheo{data_based_L-dissipativity} to closed-loop data is that the controller does not have to be implemented and no new measurements have to be taken. More importantly, the result in \reftheo{data_based_cl_L-dissipativity} can be used for controller design as shown in the next section.

\section{Data-driven controller synthesis for closed-loop dissipativity} \label{sec:data_driven_controller_synthesis_for_closed_loop_dissipativity}
In this section, we present a method for controller design with guaranteed closed-loop $L$-dissipativity based on the data-driven controller validation for $L$-dissipativity in \reftheo{data_based_cl_L-dissipativity}. More precisely, given a single measured trajectory ${\{{u}_{k},{y}_{k}\}}^{N-1}_{k=0}$ of the plant $G$, the goal is to find a controller $K$ such that the closed loop is $(L-\nu)$-dissipative w.r.t. a desired supply rate $\Pi$. We look for a suitable finite impulse response ${\{{a}_{k}\}}^{L-\nu-1}_{k=0}$ of the controller $K$, that satisfies a desired dissipativity specification
\begin{equation}
	V_{L}^{\nu}(u,y)^{\top} H_{L}(u,y)^{\top} J{^\top} M_{L-\nu}(a)^{\top} \Pi_{L-\nu} M_{L-\nu}(a) J H_{L}(u,y)  V_{L}^{\nu}(u,y) \succcurlyeq 0.
	\label{eq:synthesis_inequality}
\end{equation}
Instead of solving \refeq{synthesis_inequality} directly for an impulse response $a$, we rather parametrize the impulse response by some optimization variable $p \in \mathbb{R}^{d}$. More precisely, we parametrize the Toeplitz matrix $T_{L-\nu}(a(p)) = \sum_{i=0}^{d-1} p_{i} T_{L-\nu,i}$ by a linear combination of lower triangular basis matrices $T_{L-\nu,i}$. In this way, we can include a desired structure and impose causality in our controller in the design process. For example, choosing the parameter vector $p=\begin{pmatrix} K_p & K_i \end{pmatrix}^{\top}$ and basis matrices $T_{L-\nu,0} = I_{L-\nu}$ and $T_{L-\nu,1} =\begin{psmallmatrix}0 & 0 & \cdots & 0 \\ T_{s} & 0 & \cdots & 0 \\ \vdots & \ddots & \ddots & \vdots \\ T_{s} & \cdots & T_{s} & 0\end{psmallmatrix}$, where $T_{s}$ is the sampling rate, represents a discretized PI controller. Such a parametrization retains the possibility to solve \refeq{synthesis_inequality} for the whole impulse response $a$, by simply choosing $p=a$ and $T_{L-\nu,i}$ as square matrices of size $L{-}\nu$ with ones on the $i$-th diagonal below the main diagonal, i.e., $i=0$ represents the main diagonal. If $T_{L-\nu}(a(p))$ is linear in p then $M_{L-\nu}(a(p))$ is linear in $p$ for all channels in the standard feedback loop (see \refrema{feedback_channels}), i.e., inequality \refeq{synthesis_inequality} results in a QMI in $p$, which can be expressed in the standard form
\begin{equation}
	Q(p) = Q_{0} + \sum_{i=0}^{d-1} p_{i} Q_{i} + \sum_{i=0}^{d-1}\sum_{j=0}^{d-1} p_{i} p_{j} Q_{ij} \preccurlyeq 0,
	\label{eq:qmi}
\end{equation}
where $Q_{0}$, $Q_{i} \in \mathbb{S}$ for $i=0 \hdots d-1$ and $Q_{ij} \in \mathbb{S}$ for $i,j=0 \hdots d-1$ are symmetric matrices of the same dimension. In the following, we only state the matrices $Q_{0}$, $Q_{i}$ and $Q_{ij}$ for the channel $r \mapsto z$, while other channels can be considered in a similar fashion. To do this, we use the abbreviation $N_{u}(u,y){=} J_{L}^{\nu}  {H_{L}(u)} V_{L}^{\nu}(u,y)$ and $N_{y}(u,y){=} J_{L}^{\nu} {H_{L}(y)} V_{L}^{\nu}(u,y)$. The synthesis QMI matrices are
\begin{equation*}
\begin{aligned}
	Q_{i} &= -N_{u}^{\top}(u,y) (Q_{L-\nu}+S_{L-\nu}) T_{L-\nu,i} N_{y}(u,y) - N_{y}^{\top}(u,y) T_{L-\nu,i}^{\top} (Q_{L-\nu}+S_{L-\nu}^{\top}) N_{u}(u,y), \\
	V_{ij} &= - N_{y}^{\top}(u,y) T_{L-\nu,i}^{\top} (Q_{L-\nu}+S_{L-\nu}+S_{L-\nu}^{\top}+R_{L-\nu}) T_{L-\nu,j} N_{y}(u,y), \\
	Q_{ij} &= \left\{ \begin{array}{ll} V_{ij}, & \text{if } i=j \\ V_{ij}+V_{ji}, & \text{if } i < j \\ 0,& {\text{else}}\end{array}\right., \quad Q_{0} = -N_{u}^{\top}(u,y) Q_{L-\nu} N_{u}(u,y).
\end{aligned}
\end{equation*}
We employ a difference of convex functions programming approach to find a feasible solution to the QMI \refeq{synthesis_inequality} with the new controller parametrization $p$. The DC programming approach for finding a feasible solution to a general nonconvex QMI is described in \cite{niu2014}. We consider the algorithm proposed in \cite{niu2014} to find a feasible controller parametrization $p$ for the desired finite-horizon dissipativity specification. The algorithm solves a linear semidefinite program (SDP) in each iteration to converge to a local solution of the QMI. For more details on the theory about the QMI feasibility problems and how to solve them via DC progamming we refer to \cite{niu2014} and the references therein.

For controller design, it is often desirably to include several dissipativity specifications on one or more channels, i.e., to pursue a multi-objective formulation, which is studied, for example, in \cite{scherer1997} assuming that a model is available. In the developed setup in this paper, several different dissipativity specifications on different channels can easily be included by diagonal augmentation of the corresponding synthesis QMIs to one larger QMI. 
\begin{algorithm}[htbp]
\floatconts
  {algo:synthesis}%
  {\caption{Controller design procedure.}}%
{%
	\KwResult{Controller parameterization $p$ that satisfies finite-horizon dissipativity specification.}
	\begin{enumerate}
	\item Choose performance specifications in the form of one or several supply rates $\Pi$ for each channel and decide on a controller parametrization $T_{L-\nu,i}$.
	\item Collect data $\{u_{k},y_{k}\}_{k=0}^{N-1}$ of the plant $G$ with $u$ persistently exciting of order $L+\nu$.
    	\item Apply {\cite[][DC Algorithm 1]{niu2014}} to find a feasible solution $p$ to the corresponding synthesis inequality \refeq{qmi}.
	\item Apply the controller $u = T_{L-\nu}(a(p)) e$.
    \end{enumerate}
}%
\end{algorithm}
In \refalgo{synthesis} we describe the general controller design procedure for finite-horizon dissipativity. \refalgo{synthesis} allows us to design a controller for an unknown system achieving closed-loop $(L{-}\nu)$-dissipativity, where $\nu$ is an upper bound on the system order of the plant $G$, using only measured data. The solution to the above design problem is in general non-unique, which is due to the fact that multiple controllers with the imposed linear controller structure can satisfy the dissipativity specification. The DC algorithm from step 3 in \refalgo{synthesis} returns a local solution in the neighborhood of its initial value. By considering multiple channels in the design specification, the presented approach solves a multi-objective control problem. Further, since we can impose desired structure on the controller matrices $T_{L-\nu,i}$, the presented approach can also be seen as a structured controller design problem, which is generally difficult in model-based multi-objective control via LMI techniques (\cite{scherer1997}).
\begin{remark}
	\refalgo{synthesis} allows us to perform mixed sensitivity design (see \cite{kwakernaak2016}), where the transfer functions from $r \mapsto e$ and $r \mapsto u$ are shaped trough the introduction of additional filters. To this purpose, filters $W_{e}$ and $W_{u}$ for both channels with corresponding impulse responses $w_{e}$ and $w_{u}$ are designed. These filters are added artificially to the outputs of the channels and shape the sensitivity transfer functions when replacing \refeq{qmi}, e.g., by 
\begin{equation}
	\begin{pmatrix} *\end{pmatrix}^{\top} \begin{pmatrix}I & 0 \\ 0 & -I\end{pmatrix} \begin{pmatrix}I & 0 \\ 0 & T_{L-\nu}(w_{j})\end{pmatrix} M_{L-\nu}(a(p)) \begin{pmatrix} N_{u}(u,y) \\ N_{y}(u,y)\end{pmatrix} {\succcurlyeq} 0,
\end{equation}
for an $L_{2}$-gain dissipativity specification for $j=e,u$, from the the input of the channels to the output of the filters. The transfer funtions are then shaped in the sense that (for a sufficiently long horizon $L-\nu$) the magnitude response of the inverses of these filters is an upper bound for the magnitude response of the sensitivity transfer functions.
	\label{rema:mixed_sensitivity}
\end{remark}
\begin{remark}
Additional linear matrix inequality (LMI) or QMI conditions on the controller parameters $p$ can also be added. For example, one can bound the matrix norm of the Toeplitz matrix $T_{L-\nu}(a(p))$ to bound the control energy or even achieve finite horizon input-output stability by using similar arguments as in the small gain criterion. More precisely, one can calculate the finite horizon $L_{2}$-gain $\gamma$ of the plant $G$ via \reftheo{data_based_L-dissipativity} and impose a strict upper bound of $\frac{1}{\gamma}$ to the norm of $T_{L-\nu}(a(p))$ to obtain a finite-horizon $L_{2}$ gain $\gamma_{\mathrm{cl}} < 1$ of the closed loop. The condition on the controller can be implemented by adding the convex constraint $T_{L-\nu}(a(p))^{\top}T_{L-\nu}(a(p)) \prec \frac{1}{\gamma} I_{L-\nu}$ to the QMI feasibility problem. For a large enough horizon $L-\nu$, this implies that the (infinite-horizon) closed-loop $L_{2}$-gain is bounded by $\frac{1}{\gamma}$, compare \cite{koch2020}.
\label{rema:finite_horizon_stability}
\end{remark}
Note that if the resulting QMI \refeq{qmi} for the controller design is convex, it can be transformed into an LMI via a Schur complement and solved via LMI optimization techniques. A sufficient condition for convexity of the QMI \refeq{synthesis_inequality} for the channel $r \mapsto z$ is $Q_{L-\nu}+S_{L-\nu}+S_{L-\nu}^{\top}+R_{L-\nu} \preccurlyeq 0$. This condition can be restrictive but it holds, e.g., for an $L_{2}$-gain bound $\gamma \leq 1$ with the matrices $Q = 1$, $S=0$ and $R=-1$. Note that these conditions vary with the choice of the channel in the feedback loop. Nevertheless, if it is possible, translating the QMI into an LMI is beneficial since LMIs can be solved more efficiently and a global solution can be found. Throughout this paper, we assumed noise-free measurements, which can be restrictive in practice. A simple remedy, proposed in \cite{romer2019}, for handling noisy data is to use the relaxation
\begin{equation*}
	V_{L}^{\nu}(u,y)^{\top} H_{L}(u,y)^{\top} J{^\top} M_{L-\nu}(a(p))^{\top} \Pi_{L-\nu} M_{L-\nu}(a(p)) J H_{L}(u,y)  V_{L}^{\nu}(u,y) \succcurlyeq \delta I, \quad \text {with } \delta < 0.
\end{equation*}

\section{Numerical example} \label{sec:numerical_example}
In this section, we apply our method to control a two dimensional ($n{=}2$) linearized and discretized two tank system. The control input is the inflow to the first tank and the output of the system is the level in the second tank, which we want to control. The linearized system is discretized with a sampling rate $T_{s} {=} 0.5\mathrm{s}$, which leads to the plant $G$
\begin{equation*}
	x_{k+1} = \begin{pmatrix}0.9677 & 0 \\ 0.0317 & 0.9677\end{pmatrix} x_{k} + \begin{pmatrix}0.1363 \\ 0.0022\end{pmatrix} u_{k}, \quad y_{k}= \begin{pmatrix}0 & 1\end{pmatrix} x_{k}.
\end{equation*}
We use the specification of the proposed mixed sensitivity design in \refrema{mixed_sensitivity} with $L{=}110$ and $\nu{=}2 \geq n$ to guarantee dissipativity over the horizon $L{-}\nu {=} 108$. Next, we choose a PI parametrization $p{=}\begin{pmatrix}K_p & K_i\end{pmatrix}^{\top}$ for the controller. To apply \refalgo{synthesis}, we first generate a random persistently exciting input $u$ of order $L{+}n$ with a length of $N{=}223$ unifomly from $(-10,10)$ and measure the corresponding output $y$ of $G$. Second, we find the feasible controller parametrization $K_p {=}0.1551$ and $K_i {=} 0.0084$ to the resulting synthesis QMI \refeq{qmi} from the dissipativity specifications via DC programming by implementing the algorithm from \cite{niu2014} in YALMIP (\cite{lofberg2004}). Finally, we apply the input $u {=} T_{L-\nu}(a(p)) e$ in the feedback loop. To analyze the closed-loop behavior without use of the plant model, one can use \refprop{data_based_closed_loop_trajectories} to compute significant input-output trajectories, e.g., the step response for the channel $r \mapsto z$ (see \reffig{step_response_cl}).
\begin{figure}[h]
	\centering
	\begin{minipage}{0.48\textwidth}
\centering
	\includegraphics[width=\textwidth]{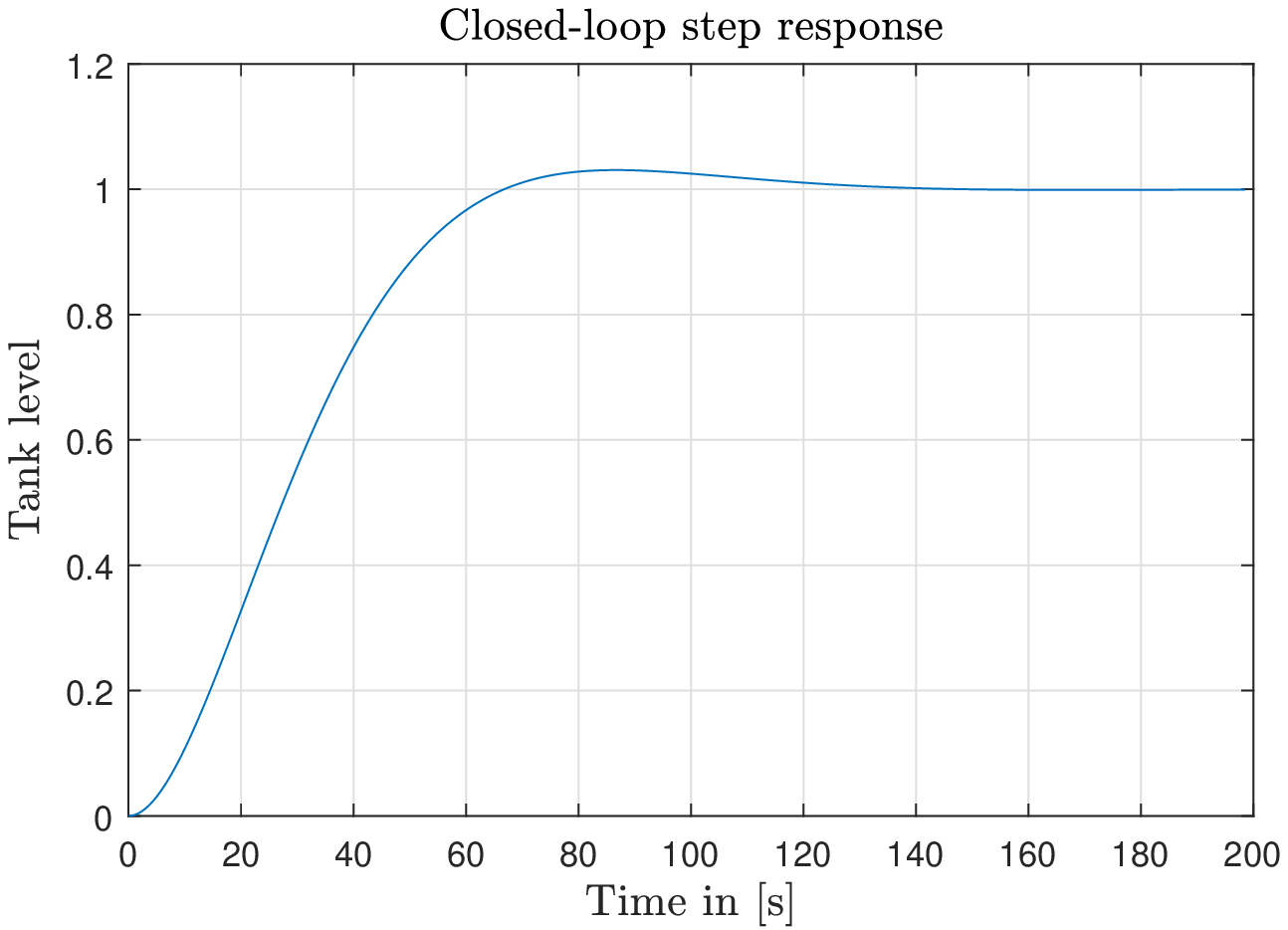}
	\caption{Step response from $r \mapsto z$.}	
	\label{fig:step_response_cl}	
\end{minipage}
\begin{minipage}{0.48\textwidth}
	\centering
	\includegraphics[width=\textwidth]{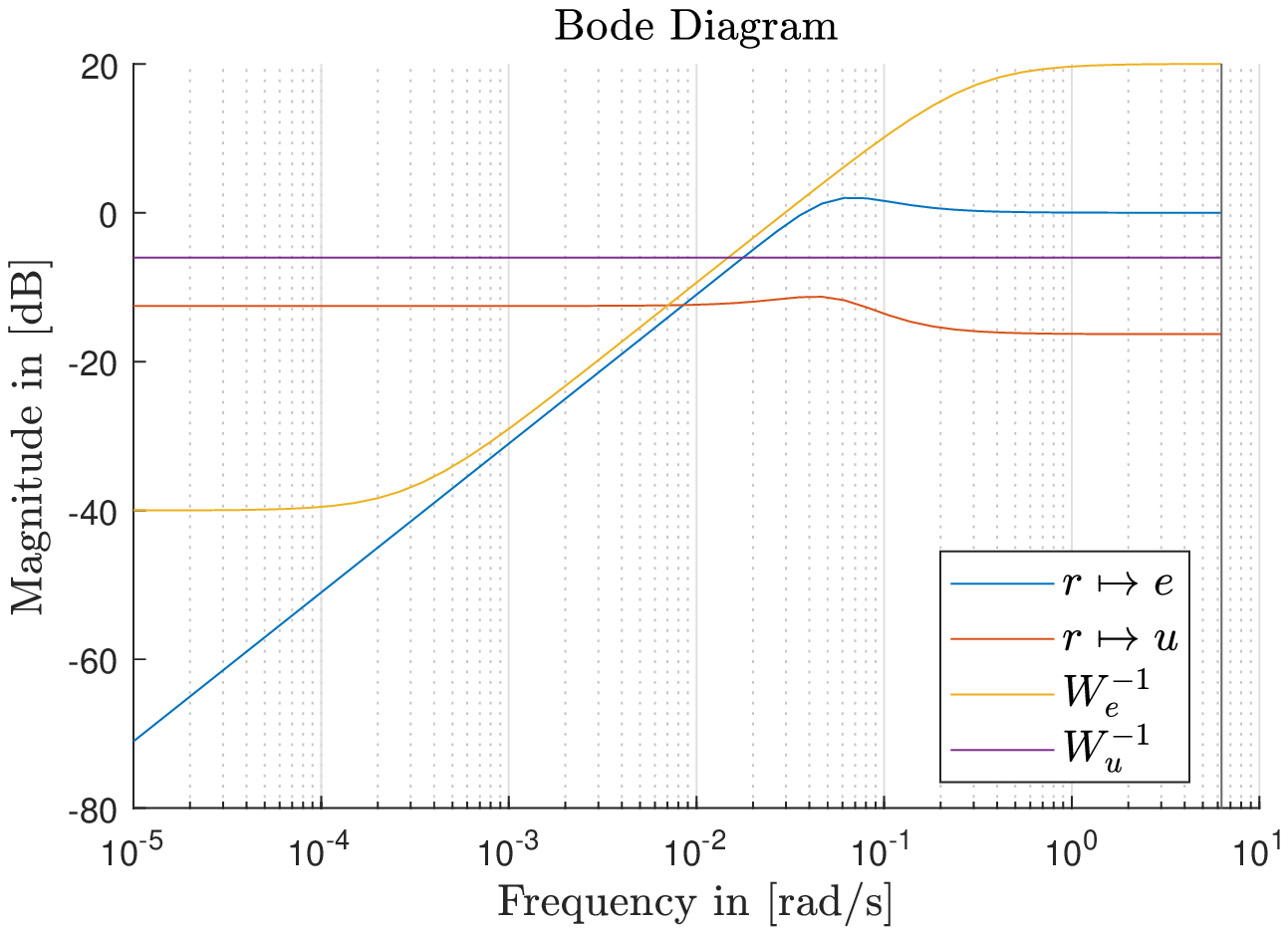}
	\caption{Magnitude responses.}
	\label{fig:magnitude_response_cl}
\end{minipage}
\end{figure}
We see that the reference is tracked without steady state error in a smooth manner. \reffig{magnitude_response_cl} shows the closed-loop magnitude responses of the sensitivity transfer functions and the satisfaction of the imposed constraints from the filters in the mixed sensitivity design. Note that plotting this magnitude plot requires knowledge of the plant model. In conclusion, we designed a PI controller for reference tracking directly from data while imposing finite-horizon mixed-sensitivity dissipativity constraints.

\section{Conclusion} \label{sec:conclusion}
We presented a data-driven framework for verifying closed-loop $L$-dissipativity based on a given controller and one input-output trajectory of the plant in the standard feedback loop. These results can then be employed to design controllers on the basis of one input-output trajectory guaranteeing desired dissipativity properties on different channels of the closed loop. The controller synthesis approach requires the solution of a QMI, which can be obtained from a DC programming approach. It is an interesting issue for future research to extend the presented results to multi-input multi-output systems, which can in general not be handled by the framework since they are not commutative in most cases. In addition, since data is usually corrupted by noise and real systems are not deterministic and LTI, it is important to extend the theoretical guarantees to such cases.

\acks{This work was funded by Deutsche Forschungsgemeinschaft (DFG, German Research Foundation) under Germany’s Excellence Strategy - EXC 2075 - 390740016. The authors thank the International Max Planck Research School for Intelligent Systems (IMPRS-IS) for supporting Julian Berberich and Anne Koch.}

\bibliography{references}

\end{document}